\newcommand{\nota}[1]{%
  {\sffamily\bfseries #1}%
  \marginpar{\framebox{\Large *}}%
}
\newcommand{\notaestesa}[2]{%
  \marginpar{\color{red!75!black}\textbf{\texttimes}}%
  {\color{red!75!black}%
    [\,\textbullet\,\textsf{\textbf{#1:}} %
    \textsf{\footnotesize#2}\,\textbullet\,]}%
}
\newcommand{\ie}{i.e.,~}
\newtheorem{theorem}{Theorem}
\newtheorem{lemma}[theorem]{Lemma}
\newtheorem{proposition}[theorem]{Proposition}
\newtheorem{definition}{Definition}
\begin{document}
\title{A New Lightweight Algorithm to compute the
  BWT and the LCP array of a Set of Strings}
\author{Paola Bonizzoni \and
  Gianluca Della Vedova \and
  Serena Nicosia \and
  Marco Previtali \and
  Raffaella Rizzi}

\date{DISCo, Univ. Milano-Bicocca, Milan, Italy}

\maketitle

\begin{abstract}
Indexing of very large collections of strings such as those produced by the widespread sequencing technologies, heavily relies on 
multi-string generalizations of the Burrows-Wheeler Transform (BWT), and
for this problem various in-memory algorithms have been proposed.    
The rapid growing of data that are processed routinely, such as in bioinformatics, requires a large amount of main memory, and this fact has motivated the development of algorithms, to compute the BWT, that work almost entirely in external memory.

On the other hand, the related problem of computing the  Longest
Common Prefix (LCP) array is often instrumental in several algorithms on collection of strings, such as those that compute the suffix-prefix
overlap among strings, which is an essential step for many genome
assembly algorithms.   

The best current lightweight approach to compute BWT and LCP array on a set of $m$ strings, each one $k$ characters long, has I/O complexity that is $O(mk^2 \log |\Sigma|)$ 
(where $|\Sigma|$ is the size of the alphabet), thus it is not optimal.    
        
In this paper we propose a novel approach to build BWT and LCP array (simultaneously) with $O(kmL(\log k +\log \sigma))$ I/O complexity,
where $L$ is the length of longest substring that appears at least twice in the input strings.
\end{abstract}

\section{Introduction}
In this paper we address the problem of costructing in external memory the Burrows-Wheeler Transform (BWT)
and the Longest Common Prefix (LCP) array for a large collection of strings. 
An efficient %main-memory
indexing of very large collections of strings is strongly
motivated by the widespread use of Next-Generation Sequencing (NGS) technologies that are
producing everyday collections of data that fill several terabytes of secondary storage,
that  has to be processed by sofware applications. 
Common applications in metagenomics require  indexing of  collections of strings (reads)
that are sampled from several genomes, where those genomes amount to billions of base
pairs.  For example, over 500 gigabases of data have been analyzed to start a
catalogue of the human gut microbiome~\cite{qin2010human}.

%With applications to text indexing in mind, Sirén [25] is one of the relatively few papers to consider explicitly the problem of BWT computation for a collection of strings (as opposed to a single large string) and gives a divide-and-conquer strategy for computing the BWT of such a collection by first splitting the collection into batches, computing the BWT of each in a distributed fashion and then merging the results. The RAM requirements are highest at this final merge stage — Sirén quotes 32Gbytes to index 42.03Gbytes of English text if 8 parallel threads are used, and double that if 16 threads are used.

The Burrows-Wheeler Transform (BWT)~\cite{Burrows1994} is a reversible transformation of a
text that was originally designed for text compression; it is used for example in the
BZIP2 program.  
The BWT of a text $T$ is a permutation of its symbols and is strictly related to the
Suffix Array of $T$. 
In fact, the $i^{th}$ symbol of the BWT is the symbol preceding the $i^{th}$ smallest suffix of $T$ according to the lexicographical sorting of the suffixes of $T$.
The Burrows-Wheeler Transform has gained importance beyond its initial purpose, and has
become the basis for self-indexing structures such as the FM-index~\cite{Ferragina2005}, which
allows to efficiently perform important tasks such as searching a pattern in a text~\cite{Ferragina2005,Li15112014,Rosone2013}.
The generalization of the BWT (and the FM-index) to a collection of strings has been
introduced in~\cite{MantaciCPM2005,MantaciTCS2007}. 

An entire generation of recent bioinformatics tools heavily rely on the notion of BWT.    
For example, representing the reference genome with its FM-index is the basis of the most
widely used aligners, such as Bowtie~\cite{Langmead2009},
BWA~\cite{LiBWABioinformatics2009,LiBWABioinformatics2010} and
SOAP2~\cite{LiSOAP2Bioinformatics2009}.     

Still, to attack some other fundamental bioinformatics problems, such as genome assembly,
an all-against-all comparison among the input strings is needed, especially to
find all prefix-suffix matches (or overlaps) between reads in the context of the Overlap Layout Consensus (OLC) approach based on string graph~\cite{Myers2005}. This fact justifies
to search for extremely time and space efficient algorithms to compute the BWT on a collection of strings~\cite{li_exploring_2012,Valimaki2010,Ferragina2012,bauer_lightweight_2013}.    
For example, SGA (String Graph Assembler)~\cite{Simpson2012} is a de novo genome assembler that builds a string graph from the FM-index of the collection of input reads. 
In a preliminary version of SGA~\cite{Simpson2010}, the authors  estimated, for human sequencing data at a 20x coverage, the need of
700Gbytes of RAM in order to build the suffix array, using the construction algorithm
in~\cite{nong2009linear}, and the
FM-index.

Another technical device that is used to tackle the genome assembly in the OLC approach is the Longest Common
Prefix (LCP) array of a collection of strings, which is instrumental to compute (among others) the prefix-suffix matches in the collection.
%The latter is used to compute the string graph, the main data structure in the
%Overlap Layout Consensus (OLC) approach for genome assembly. 
The huge amount of available biological data has stimulated the development  of
the first efficient external-memory algorithms (called, BCR and BCRext) to construct the BWT of a collection of strings ~\cite{Bauer2011}.  
Similarly, a lightweight approach to the construction of the LCP array has been
investigated~\cite{DBLP:conf/wabi/BauerCRS12}. 
Towards an external memory genome assembler,
LSG~\cite{DBLP:conf/wabi/BonizzoniVPPR14,Bonizzoni2015} is founded upon BCRext and builds in external memory the string graph of a set of strings.    
In that approach, external memory algorithms to compute the BWT and the LCP
array~\cite{bauer_lightweight_2013,DBLP:conf/wabi/BauerCRS12} are fundamental.     

Still, the construction of the BWT (and LCP array) of a huge collection of strings is a
challenging task. 
A simple approach is constructing the BWT from the Suffix Array, but it is prohibitive for massive datasets. 
A first attempt to solve this problem~\cite{Siren2009}
partitions the input collection into batches, computes the BWT for each batch and then merges the results.

In this paper we present a new lightweight (external-memory) approach to compute the BWT and the LCP array of a collection of strings, which is alternative to BCRext~\cite{Bauer2011}.    
The algorithm BCRext is proposed together with BCR and both are designed to work on huge collections of strings (the experimental  analysis is on billions of 100-long strings).    
Those algorithms are lightweight because, on a collection of $m$ strings of length $k$,
BCR uses only $O(m \log(mk))$ RAM space and $O(km +  sort(m))$ CPU time, where $sort(m)$ is the
time taken to sort $m$ integers. The same complexity holds for the lightweight LCP algorithm given in~\cite{DBLP:conf/wabi/BauerCRS12}. 
%Instead, BCRext works almost entirely in external memory, taking $O(km)$ time and main
%memory usage that is dependent only on the alphabet size and on $m$. 
Though the use of the RAM is negligible for DNA data,  the overall I/O volume is $O(k^2m + mk \log(mk))$. 
Clearly, a main question is if it is possible to achieve the optimal  $O(km)$ I/O complexity. 
Both BCR and BCRext build the BWT with a column-wise approach, where at each step $i$ the
elements preceding the suffixes of length $k - i - 1$ of each read are inserted in the
correct positions of the \emph{partial} BWT that considers only suffixes shorter than $k -
i - 1$.
Moreover, both algorithms are described as a succession of sequential scans, where the
partial BWTs are read from and and written to external files, thus obtaining a small main
memory footprint. 

Compared to BCRext, our algorithm uses an I/O volume that is  $O(L k m \log k)$, where $L$
is the maximum length of any substring appearing at least twice in the input collection.    
Clearly $L\le k$.    
% In our work we consider the  Burrows–Wheeler Transform (BWT) variant that has been adapted to string collections by Bauer et al., 2011 for representing raw sequence data. Indeed,  the BWT was originally introduced as an algorithm for permuting a single string to improve its compressibility (Burrows and Wheeler, 1994).
% Observe that Bauer et al. (2011), the string dataset must first be sorted \notaestesa{MP}{Non capisco bene cosa si intende, il dataset di input non deve essere ordinato}. On large datasets, this might require an ‘out-of-core’ or external sorting algorithm. This step is not required by our algorithm.
Compared with BCR, our approach does not require an explicit sort of a generic set, but it
is mainly based on the simple idea of building partial BWTs, each one for the set of suffixes of a given length $l$, then merging those partial BWTs to obtain the complete BWT by using an
approach similar to the one proposed in~\cite{Holt2014}, where the construction of a
multi-string BWT is proposed with the main goal of merging BWTs for distinct genomic
sequences.

\section{Preliminaries}
\label{sec:definitions}
%\notaestesa{RR}{Section modificata dopo la sottomissione a CPM}

Let $\Sigma = \{c_0, c_1, \cdots, c_{\sigma}\}$ be a finite alphabet where $c_0 = \$$ (called \emph{sentinel}), and $c_0 < c_1 \cdots < c_{\sigma}$ where $<$ specifies the lexicographic ordering over alphabet $\Sigma$.  
We consider a collection  $S=\{s_1, s_2, \cdots, s_m\}$ of $m$ strings, where each string $s_j$ consists of $k$ symbols  over the
alphabet $\Sigma \setminus \{\$ \}$ and is terminated by the symbol \$.
%Given $s_i$ and $s_j$ in the collection $S$, then $s_i$ \emph{precedes} $s_j$ if $i < j$.
%
The $i^{th}$ symbol of string $s_j$ is denoted by $s_j[i]$  and the substring $s_j[i]s_j[i+1] \cdots  s_j[t]$ of $s_j$ is denoted by $s_j[i:t]$. In order to simplify the presentation, we assume that all the strings in $S$ have the same length $k$.  
The \emph{suffix} and \emph{prefix} of $s_j$ of length $l$ are the
substrings $s_j[k-l +1: k]$ (denoted by $s_j[k-l +1:]$) and $s_j[1: l]$ (denoted by
$s_j[:l]$) respectively.
Then the \emph{$l$-suffix} and \emph{$l$-prefix} of a string $s_j$ is the suffix and prefix with length $l$, respectively.
The lexicographic ordering among strings in $S$ is defined in the usual way. Though we use the same sentinel to terminate strings, we can easily distinguish the same suffix of different strings by assuming an implicit ordering of the sentinels that is induced by the ordering of the input strings. More precisely, we assume that given $s_i, s_j \in S$, with $i < j$, then the sentinel of $s_i$ precedes the sentinel of $s_j$.

%RR: SPOSTATO
%Given $l$, with $1 \leq l \leq k$,  we can define the lexicographic ordering of all $l$-suffixes of strings in $S$, denoted as the $m$-long array $X_{l}$. Then $X_{l}[i]=s_j[k -l+1:]$ means that the $i$-th suffix in  $X_{l}$ is the $l$-suffix of $s_j$.
%We also consider a list $X_{0}$ of the $0$-suffixes of the set $S$, consisting of the sequence of the sentinels of all input strings.
%such  $X_{0}[i] < X_{0}[j]$ if $s_i$ precedes $s_j$ in the collection $S$.

%Given the lexicographic ordering $X$ of the elements of the arrays $X_0, X_1, \ldots X_k$, the \emph{suffix array}  of $S$   is the $(m(k+1))$-long array $SA$ where each element $SA[i]$ is equal to $(p, j)$ if and only if the $i$-th element of $X$ is the  $p$-suffix of string $s_{j}$  in the collection.
Given the lexicographic ordering $X$ of the suffixes of $S$, the \emph{Suffix Array} is the $(m(k+1))$-long array $SA$ where the element $SA[i]$ is equal to $(p, j)$ if and only if the $i^{th}$ element of $X$ is the  $p$-suffix of string $s_{j}$. %in $S$.
The \emph{Burrows-Wheeler Transform (BWT)} of $S$ is the $(m(k+1))$-long array $B$ where if $SA[i] = (p,j)$, then $B[i]$ is the first symbol of the $(p+1)$-suffix of $s_j$ if $p<k$, otherwise $B[i]= \$$. 
In other words $B$ consists of the symbols preceding the ordered suffixes of $X$. %in the strings where they occur. 
%RR: SPOSTATO
%In the paper we will use extensively the $m$-long arrays $B_l$, where each $B_l[i]$ is the symbol preceding the suffix $X_l[i]$, that is the $i$-th smallest $l$-suffix of $S$.
%
%\item $N_l$ is the $m$-long list such that $N_l[i]$ is the index $q$ of the string $s_q \in S$ whose $l$-suffix is $X_l[i]$, that is the $i$-th smallest $l$-suffix of $S$
%
The \emph{Longest Common Prefix (LCP) array} of  $S$ is the $(m(k+1))$-long array
$LCP$  such that $LCP[i]$  is the length of the longest prefix shared by suffixes $X[i-1]$ and $X[i]$.  Conventionally, $LCP[1]=-1$.

%Given $n$ arrays $V_1, \ldots, V_n$, then an array $W$ is an \emph{interleave} of $V_1, \ldots, V_n$ if
%$W$ is the result of merging the arrays $V_1, \ldots , V_n$ such that there is a 1-to-1 function $\psi$ from the set $\cup_{i=1}^n \{ (i,j): 1\le j\le |V_i|\}$ to $\{ j : 1\le j\le |W|\}$ such that 
%$V_i[j] = W[\psi(i,j)]$ for each $i,j$, and $\psi(i, j_1) < \psi(i, j_2)$ for each $j_1 < j_2$.
%In other words, $W$ is a fusion of the arrays $V_1, \ldots, V_n$ that preserves the 
%relative ordering of the elements in each array $V_i$.
%Again, for each $l$ with $1 \leq l \leq n$, the  $i$-th element of $V_l$ corresponds to the  $i$-th occurrence of an element of $V_l$  in $W$.
Now, we give the definition of \emph{interleave} of a generic set of arrays, that will be used extensively in the following.

\begin{definition}
\label{def:interleave}
Given $n+1$ arrays $V_0, V_1, \cdots, V_n$, then an array $W$ is an \emph{interleave} of $V_0, V_1, \cdots, V_n$ if
$W$ is the result of merging the arrays such that: (i) there is a 1-to-1 function $\psi_W$ from the set $\cup_{i=0}^n \{ (i,j): 1\le j\le |V_i|\}$ to the set $\{ q : 1\le q \le |W|\}$, (ii)
$V_i[j] = W[\psi_W(i,j)]$ for each $i,j$, and (iii) $\psi_W(i, j_1) < \psi_W(i, j_2)$ for each $j_1 < j_2$.
\end{definition}

By denoting with $L = \sum_{i=0}^{n} |V_i|$ the total length of the arrays, the interleave $W$ is a $L$-long array giving a fusion of $V_0, V_1 \cdots, V_n$ which preserves the relative order of the elements in each one of the arrays. As a consequence, for each $i$ with $0 \leq i \leq n$, the  $j^{th}$ element of $V_i$ corresponds to the  $j^{th}$ occurrence in $W$ of an element of $V_i$. This fact allows to encode the function $\psi_W$ as a $L$-long array $I_W$ such that $I_W[q] = i$ if and only if $W[q]$ is an element of $V_i$. Given $I_W$, it is possible to reconstruct $W$ by considering that $W[q]$ is equal to $V_{I_W[q]}[j]$ where $j$ is the number of values equal to $I_W[q]$ in the interval $I_W[1,q]$; this number will be called \emph{rank} at position $q$.
In the following, we will refer to vector $I_W$ as \emph{interleave-encoding} (or simply \emph{encoding}).
Algorithm~\ref{alg:interleave-encoding} shows how to reconstruct an interleave from its encoding (the array $rank$ is used to store the rank values), and can also be used to simulate a scan of $W$ by means of its encoding $I_W$.

\begin{algorithm}[htb!]
% \SetKwInOut{Input}{Input}\SetKwInOut{Output}{Output}
%	 \Input{}
%	 \Output{}
 
	\For{$i\gets 0$ to $n$}{%
		$rank[i] \gets 0$\;
	}

	\For{$q \gets 1$ to $|I_W|$}{%
		$i \gets I_W[q]$\;
		$rank[i] \gets rank[i] +1$\;
		$W[q] \gets V_i[rank[i]]$\;
	}
	
	\caption{Reconstruct the interleave $W$ from the encoding $I_W$}
	\label{alg:interleave-encoding}
\end{algorithm}

\section{The lightweight algorithm for BWT and LCP array}
%\notaestesa{RR}{Section modificata dopo la sottomissione a CPM}

%\notaestesa{RR}{Mettere definizioni per $B_l$ e $X_l$}
%Let $B_l$ ($0 \le l \le k$) be the $m$-long array such that $B_l[i]$ is the symbol preceding the $i$-th smallest $l$-suffix of $S$.
%Furthermore, let $X_l$ ($0 \le l \le k$) be the $m$-long array such that $X_l[i]$ is the $i$-th smallest $l$-suffix of $S$.
Let $B_l$ and $X_l$ ($0 \le l \le k$) be $m$-long arrays such that $B_l[i]$ is the symbol preceding the $i^{th}$ smallest $l$-suffix of $S$ and $X_l[i]$ is the $i^{th}$ smallest $l$-suffix of $S$.
%
%Then $X_{l}[i]=s_j[k -l+1:]$ means that the $i$-th suffix in  $X_{l}$ is the $l$-suffix of $s_j$, and $B_l[i]$ is the preceding symbol.
%
It is easy to see that the BWT $B$ is an interleave of the $k+1$ arrays $B_0, B_1, \cdots, B_k$, since the ordering of symbols in $B_{l}$ ($0 \leq l \leq k$) is preserved in $B$, i.e. $B$ is \emph{stable} w.r.t. each array $B_0, B_1, \cdots, B_k$. This fact is a direct consequence of the definition of $B$ and $B_{l}$. 
For the same reason, the lexicographic ordering $X$ of all suffixes of $S$ is an interleave of the arrays $X_0, X_1, \cdots, X_k$.
Let $I_B$ be the encoding of the interleave of arrays $B_0, B_1, \cdots, B_k$ giving the BWT $B$, and let $I_X$ be the encoding of the interleave of arrays $X_0, X_1, \cdots, X_k$ giving $X$. Then it is possible to show that $I_B = I_X$.

%Our algorithm for building the BWT $B$ and the LCP array, differently from~\cite{Bauer2011}, consists of two distinct phases: in the first phase   the arrays $B_0, \ldots , B_{k}$ are
%computed. In the second phase the  arrays $B_0, \ldots B_{k}$ are merged to compute the interleave that is equal to $B$.
Our algorithm for building the BWT $B$ and the LCP array, differently from~\cite{Bauer2011}, consists of two distinct phases: in the first phase the arrays $B_0, B_1, \cdots , B_{k}$ are computed, while the second phase determines $I_X$ (which is equal to $I_B$) thus allowing to reconstruct $B$ as an interleave of $B_0, B_1, \cdots, B_k$. 
% \notaestesa{RR}{C'\'e un po' di ridondanza con l'introduzione?}
Indeed, BCRext~\cite{Bauer2011} computes the BWT of the collection  $S$  incrementally via $k+1$ iterations. 
At each iteration $l$, with $0\le l\le k$, the algorithm computes a partial BWT
$bwt_{l}(S)$ that is the BWT for the ordered collection of  suffixes of length at
most  $l$, that is for the lexicographic ordering of $X_0, X_1,  \cdots, X_{l}$.
This approach requires that, at each iteration $l$,  the symbols preceding the
$(l-1)$-suffixes of $S$ must be inserted  at their correct positions into $bwt_{l-1}(S)$,
that is each  $l$  iteration simulates the insertion of the $l$-suffixes in the ordered
collection of $\cup_{i=0}^{l-1}X_{i}$.  Updating  the partial BWT $bwt_{l}(S)$ in external memory, the
process  requires a sequential visit of the file containing the basic information of the
partial $bwt_{l -1}(S)$. Thus the I/O volume at each  iteration $l$  is at least $m (l-1) \log \sigma$ (since there are  $m$ suffixes for each length $i$ between $1$ to $l -1$). 
Consequently the total I/O volume for computing $bwt_{k}(S)$ is  at least $O(m k^2)$. 
More precisely, the BCRext algorithm in \cite{Bauer2011} that uses less RAM,  requires at each $l$ iteration an additional  I/O volume  given by $m \log (km)$, due to a process of ordering special arrays used to save RAM space. 
Our algorithm instead consists of a first phase that
has $O(m k)$ I/O volume  and time complexity  and produces the arrays $B_0, B_1, \cdots, B_k$ (see procedure {\bf Partition-suffixes}),
%Clearly, the BWT $B$ is an interleave of the arrays
%$B_0, \ldots, B_k$, since the ordering of symbols in $B_{l}$ is preserved in the final
%array $B$, i.e. $B$ is \emph{stable} w.r.t. each array $B_0, \ldots, B_k$. This fact is a
%direct consequence of the definition of $B$ and $B_{l}$. Indeed, given array $B_{l}$ and
%the array $X_{l}$, if $s_j[k - l +1:]$ is the $i$-th suffix of $X_{l}$, then
%$B_{l}[i] = s_j[k - l]$. 
%Now, the sequence $X$, which consists of  all suffixes of $S$  lexicographically sorted,
%is an interleave of  $X_0,  \ldots , X_{k}$, since the order of each $X_{i}$ is preserved
%in $X$.    
%Consequently, $l$-suffixes in $B$ are listed as in $B_{l}$. 
and a second phase which computes $I_X$ by
implicitly merging the arrays $X_{0}, X_1, \cdots , X_{k}$ into the interleave $X$ of the overall ordered set of all suffixes (see procedure {\bf Merge-suffixes}). As described in Section~\ref{sec:merge-suffixes}, the procedure does not need to compute explicitly the arrays $X_0, X_1, \cdots, X_k$ and the interleave $X$. Inspired by~\cite{Holt2014}, we perform this step by a number of $L$ iterations, where $L$ the length of the longest substring that has at least two occurrences in $S$. 
Thus the merging operation takes fewer iterations than BCRext (the latter requires $k$).

\section{The Procedure Partition-suffixes}

The input set $S=\{s_1, s_2, \cdots, s_m\}$ is preprocessed in order to have a fast access to its symbols, and $k$ $m$-long arrays   $S_0, S_1, \cdots, S_{k-1}$ are obtained. More in detail, the element $S_l[i]$ ($0 \leq l \leq k-1$)  is the $(k-l)^{th}$ symbol of the string $s_i$, that is $s_i[k-l]$.
In other words $S_l[i]$ is the symbol preceding the $l$-suffix of $s_i$.
The procedure Partition-suffixes (see Algorithm~\ref{alg:sort-suffixes}) takes in input  the arrays   $S_0, S_1, \cdots, S_{k-1}$ and computes the arrays $B_0, B_1, \cdots, B_k$
by using $k+1$ $m$-long arrays $N_l$ ($0 \leq l \leq k$), where $N_l[i]=q$ if and only if the $l$-suffix of the input string $s_q$ is the $i^{th}$ element of $X_l$.
%smallest $l$-suffix of $S$.
%
Notice that %if $N_l[i] = q$  then
the symbol $B_l[i]$ precedes the $l$-suffix $s_q[k-l+1:]$, that is $B_l[i] =s_q[k-l]$.
In particular, $N_0$ contains the sequence of indexes
$(1,2,3,\cdots, |S|)$ and $B_0$ contains the sequence $\langle s_1[k], s_2[k], \cdots, s_m[k] \rangle$ of the last symbols of the input strings (i.e. the symbols before the sentinels).

In order to specify the structure of the Procedure Partition-suffixes, given a symbol $c_h$ of the alphabet $\Sigma$,   we define the \emph{$c_h$-projection operation ${\Pi}_{c_h}$} over the array   $N_l$ that consists in taking from $N_l$  only the entries $i$ such that $s_i[k-l] =c_h$. In other words ${\Pi}_{c_h}(N_l)$ is the vector  that projects the entries of $N_l$ corresponding to strings whose $l$-suffix is preceded by the symbol $c_h$.
Then the following Lemma directly follows from definition of $N_{l  -1}$.

%\begin{lemma}
%Given the array $N_{l-1}$,  the  sequence of indexes of strings ordered w.r.t. the $l$-suffix starting with symbol $c_h$ is equal to vector ${\Pi}_{c_h}(N_{l  -1})$.
%\end{lemma}
\begin{lemma}
Given the array $N_{l-1}$,  the  sequence of indexes of strings, whose $l$-suffix starts with symbol $c_h$ and ordered w.r.t. the $l$-suffix, is equal to vector ${\Pi}_{c_h}(N_{l  -1})$.
\end{lemma}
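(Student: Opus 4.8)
The plan is to unwind the definitions and show that both sides of the claimed equality are the same sequence, in the same order. Recall that $N_{l-1}$ lists the string-indices sorted by their $(l-1)$-suffixes: $N_{l-1}[i] = q$ iff $s_q[k-l+2:]$ is the $i^{th}$ smallest $(l-1)$-suffix. The operation $\Pi_{c_h}(N_{l-1})$ deletes from this list exactly the entries $i$ with $s_i[k-l] \neq c_h$, keeping the surviving entries in their original left-to-right order. We want to show the resulting subsequence equals the list of indices $q$ such that the $l$-suffix of $s_q$ begins with $c_h$ (equivalently $s_q[k-l+1] = c_h$), sorted according to the $l$-suffix order.

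First I would observe the bijection between the two index sets: a string $s_q$ contributes an entry to $\Pi_{c_h}(N_{l-1})$ precisely when $s_q[k-l] = c_h$, i.e. when $B_l[\cdot] = c_h$ for the position of its $l$-suffix — wait, more carefully, $\Pi_{c_h}$ here selects entries $i$ of $N_{l-1}$ with $s_i[k-l] = c_h$, and for such $i$ the symbol $s_i[k-l]$ is the first symbol of the $l$-suffix of $s_i$ (since the $l$-suffix is $s_i[k-l+1:]$… let me re-check the indexing: the $l$-suffix of $s_i$ is $s_i[k-l+1:k]$, so its first symbol is $s_i[k-l+1]$, not $s_i[k-l]$). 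So the projection is actually keyed by position $k-l$, which is the symbol \emph{preceding} the $(l-1)$-suffix but equal to the first symbol of the $l$-suffix only if one shifts the window. I would therefore state precisely which position the projection reads and confirm that taking the $(l-1)$-suffixes of strings whose preceding symbol is $c_h$ and prepending that symbol yields exactly the $l$-suffixes whose first symbol is $c_h$; this is the content of the ``directly follows'' claim.

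The core argument is an order-preservation statement: if two strings $s_q$ and $s_{q'}$ both have $c_h$ in position $k-l$, then the $l$-suffix of $s_q$ precedes the $l$-suffix of $s_{q'}$ in lexicographic order if and only if the $(l-1)$-suffix of $s_q$ precedes the $(l-1)$-suffix of $s_{q'}$. This holds because the $l$-suffix of $s_q$ is $c_h$ concatenated with the $(l-1)$-suffix of $s_q$ (and likewise for $s_{q'}$), and prepending a common symbol $c_h$ to two strings preserves their lexicographic comparison. Ties, i.e. equal $(l-1)$-suffixes of distinct strings, are resolved by the implicit ordering of sentinels induced by the string indices, and this same tie-break is inherited consistently by the $l$-suffixes; so the relative order of the surviving entries in $N_{l-1}$ is exactly the $X_l$-order restricted to strings whose $l$-suffix starts with $c_h$. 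Combining the bijection on index sets with this order-preservation gives that $\Pi_{c_h}(N_{l-1})$ is precisely the claimed sorted sequence.

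I do not expect any real obstacle here — the lemma is essentially a bookkeeping statement — but the one place to be careful is the index arithmetic (which position of $s_q$ is read by $\Pi_{c_h}$, and how the $l$-suffix decomposes as ``one symbol'' followed by the $(l-1)$-suffix), together with making the sentinel tie-break explicit so that ``ordered w.r.t.\ the $l$-suffix'' is well-defined even when several strings share the same $(l-1)$-suffix. Once those are pinned down, the proof is two short observations: equality of the underlying multisets of indices, and monotonicity of lexicographic order under prepending a fixed symbol.
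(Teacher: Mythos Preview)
Your overall plan is correct and is exactly the natural unwinding of the definitions; the paper in fact offers no proof at all, stating only that the lemma ``directly follows from definition of $N_{l-1}$.'' Your two-step argument (the projected multiset of indices equals the target multiset, and prepending a common symbol $c_h$ preserves the lexicographic order of the $(l-1)$-suffixes, hence the relative order in $N_{l-1}$ coincides with the $l$-suffix order on the selected strings) is precisely what that one-line justification amounts to.

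The only thing to fix is the index arithmetic you flagged: you applied the definition of $\Pi_{c_h}$ with the wrong parameter. The projection is defined on $N_l$ by the condition $s_i[k-l]=c_h$ (the symbol preceding the $l$-suffix). Substituting $l-1$ for $l$, the projection $\Pi_{c_h}(N_{l-1})$ keeps the entries $i$ with $s_i[k-(l-1)]=s_i[k-l+1]=c_h$, and $s_i[k-l+1]$ is exactly the first symbol of the $l$-suffix $s_i[k-l+1:k]$. So there is no mismatch or ``shift'' to worry about: the selection criterion for $\Pi_{c_h}(N_{l-1})$ is literally ``the $l$-suffix starts with $c_h$,'' and the bijection of index sets is immediate. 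With that corrected, your order-preservation argument (including the sentinel tie-break) finishes the proof.
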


As a main consequence of the above Lemma the array $N_{l}$ can be simply obtained from $N_{l-1}$ as the concatenation  ${\Pi}_{{c}_0}(N_{l  -1}) \cdot {\Pi}_{{c}_1}(N_{l  -1}) \cdots  {\Pi}_{c_{\sigma}}(N_{l-1})$ where ${c}_0 \cdot {c}_1 \cdots {c}_{\sigma}$ is the lexicographic order of symbols of alphabet $\Sigma$.
Notice that  the ${c_h}$-projection of $N_{l  -1}$,  %for a symbol $c_h \in \Sigma$
is computed %using vector $B_{l-1}$,
by listing the positions $i$ of $N_{l-1}$ such that $B_{l-1}[i]=c_h$. Indeed, $B_{l-1}$ lists the symbols precedings the ordered $(l-1)$-suffixes.  

The procedure Partition-suffixes computes arrays $B_0, \cdots, B_k$ in $k$ iterations.
At each iteration $l$, arrays $B_l$ and $N_l$ are computed from arrays $B_{l-1}$ and $N_{l-1}$.
The array $N_{l-1}$ is stored in $|\Sigma|$ lists $N_{l-1}(c_h)$, where  $N_{l-1}(c_h)$ is the $c_h$-projection of $N_{l-1}$. In the following, the arrays are treated as lists which can be stored in external files.

The basic procedure to compute $N_{l}$  from $B_{l-1}$ %(we recall that $B_{l-1}$ is
%the list of symbols that precedes the $(l-1)$-suffixes)
and $N_{l-1}$ is the following. First, $B_{l -1}$ is sequentially read and, for each position $i$, $N_{l-1}[i]$ is appended to the list $N_{l-1}(c_h)$, where $c_h=B_{l-1}[i]$. At this point,
$N_l$ is given by the concatenation of lists $N_{l-1}(c_0)N_{l-1}(c_1) \cdots N_{l-1}(c_\sigma)$.
After computing $N_{l}$, the vector $B_{l}$ can be obtained. Indeed, assuming that the $j^{th}$ element in the ordered list of $l$-suffixes is the suffix of string $i$ (that is, $N_{l}[j]=i$) the symbol preceding such suffix is $s_i[|s_i| - l]$ and is directly obtained by accessing position $i$ of vector $S_l$ (recall that $S_l$ has been computed in the preprocessing phase).
More precisely, $N_l$ is sequentially read and, for each position $j$, if $N_{l}[j] = i$ then $B_{l}[j] = S_l[i]$.
Due to a random access, array $S_l$ %as defined by the following steps,
it is assumed to be kept in RAM with a space cost of $O(m \log \sigma)$. %(1) read sequentially $N_l$, (2) for each position 
%$j$, if $N_{l}[j] = i$ then $B_{l}[j] = S_l[i]$

\begin{algorithm}[htb!]
	\SetKwInOut{Input}{Input}\SetKwInOut{Output}{Output}
	\Input{The arrays $S_0, \cdots, S_k$.}
	\Output{The arrays $B_0, \cdots, B_k$.}
	\For{$i\gets 1$ to $m$}{%
		$B_0[i] \gets S_0[i]$\;
        $N_{0}[i] \gets i$\;
	}
	\For{$l\gets 1$ to $k$}{%
		\ForEach{$c \in\Sigma$}{%
			$N_{l-1}(c) \gets $ empty list\;
		}
        $B_l \gets$ empty list\;
        $N_l \gets$ empty list\;
		\For{$i\gets 1$ to $m$}{%
	        $c \gets B_{l-1}[i]$\;
			Append $N_{l-1}[i]$ to  $N_{l-1}(c)$\;
        }
		\For{$i\gets 0$ to $|\Sigma|$}{%
			Append $N_{l-1}(c_i)$ to $N_l$\;
		}
		\For{$i\gets 1$ to $m$}{%
			Append $S_{l}[N_{l}[i]]$ to $B_{l}$\;		
        }
	}
	\caption{Partition-suffixes}
	\label{alg:sort-suffixes}
\end{algorithm}

\section{The procedure \textbf{Merge-suffixes}}
\label{sec:merge-suffixes}

The second step of our algorithm computes the encoding $I_X$ of the interleave $X$ of the arrays $X_{0}, X_1, \cdots , X_{k}$, giving the lexicographic ordering of all suffixes of $S$ and (at the same time) computes the LCP array. Recall that $I_X$ is equal to the encoding $I_B$ of the interleave of the arrays $B_{0}, B_1, \cdots , B_{k}$ giving the BWT $B$.
This section is devoted to describe how to compute $I_{X}$ from which it is easy to obtain the BWT $B$ as explained in Algorithm~\ref{alg:interleave-encoding}, while the description of the approach to obtain the LCP array is postponed until Section~\ref{sec:computing-lcp-array}.

Before entering into the details, we need some definitions.

\begin{definition}
\label{def:prec-p-relation}
Let $\alpha = s_{i_{\alpha}}[k-l_{\alpha}+1:]$ and $\beta = s_{i_{\beta}}[k-l_{\beta}+1:]$ be
two generic suffixes of $S$, with length respectively $l_{\alpha}$ and $l_{\beta}$.    
Then, given an integer $p$, $\alpha \prec_p \beta$ (and we say that $\alpha$ \emph{p-precedes} $\beta$) iff one of the
following conditions hold: (1) $\alpha[:p]$ is lexicographically strictly smaller than
$\beta[:p]$, (2) $\alpha[:p]=\beta[:p]$ and $l_{\alpha} < l_{\beta}$, (3)
$\alpha[:p]=\beta[:p]$, $l_{\alpha} = l_{\beta}$ and $i_{\alpha} < i_{\beta}$. 
\end{definition}

\begin{definition}
\label{definition:p-interleave-X}
Given the arrays $X_0, X_1, \cdots, X_k$, the \emph{$p$-interleave} $X^p$ ($0 \leq p \leq k$) is the interleave such that $X^p[i]$ is the $i^{th}$ smallest suffix in the $\prec_p$-ordering of all the suffixes of $S$.
\end{definition}

It is immediate to verify that $X^k$ (that is, the suffixes sorted according to the $\prec_{k}$ relation) is equal to $X$, hence $I_X=I_{X^k}$.  
Therefore, our approach is to determine $I_{X^k}$ by iteratively computing
$I_{X^p}$ by increasing values of $p$, starting from $I_{X^0}$. Observe that $X^0$ lists the suffixes in the same order given by the concatenation of arrays $X_0, X_1, \cdots, X_k$ and the encoding $I_{X^0}$ is trivially given by $|X_0|$ $0$s, followed by $|X_1|$ $1$s, \ldots, followed by $|X_k|$ values equal to $k$.

\begin{definition}
\label{def:p-segment}
Let $X^{p}$ be the $p$-interleave of $X_0, X_1, \cdots, X_k$, and let $i$ be a position.    
Then, the \emph{$p$-segment} of $i$ in $X^{p}$ is the maximal interval $[b,e]$ such that  $b\le i\le e$ and all suffixes in $X^p[b,e]$ have the same $p$-prefix.    
Positions $b$ and $e$ are called respectively \emph{begin} and \emph{end}
position of the segment, and the common $p$-prefix is denoted by $w_p(b,e)$.
\end{definition}

It is immediate to observe that the set of all the $p$-segments of  a $p$-interleave form a partition
of its positions $(1,\cdots, (k+1)m)$.
Observe that, by definition, a suffix smaller than $p$ belongs to a $p$-segment $[b,e]$ having $b=e$. In other words, such suffix is the unique element of the $p$-segment.

Before describing the approach, the computation of $X^p$ from $X^{p-1}$ is explained. Let $Q^p_l$ ($0 \le p \le k$ and $0 \le l \le k$) be the $m$-long array such that $Q^p_l[i]$ is the $p^{th}$ symbol of the suffix $X_l[i]$. In particular, $Q_{l}^{p}[i]$ is the sentinel \$ if the suffix is smaller than $p$.
Moreover, let 
$Q^p$ be the interleave of  the arrays $Q^p_0, Q^p_1, \cdots, Q^p_k$ such that $I_{Q^p} = I_{X^{p-1}}$. In other words, $Q^p[i]$ is the $p^{th}$ symbol of the suffix $X^{p-1}[i]$. 

\begin{lemma}  
\label{lemma:pi-function}
Let $[b,e]$ be a $(p-1)$-segment of $X^{p-1}$. Then, $X^p[b,e]$ is a permutation of $X^{p-1}[b,e]$ defined by the permutation $\Pi^{p-1}_{b,e}$ of the indexes $(b, b+1, \cdots, e)$ producing the stable ordering of the symbols in $Q^p[b,e]$, such that the $r^{th}$ suffix of $X^p[b,e]$ is the suffix of $X^{p-1}$ in position $\Pi^{p-1}_{b,e}[r]$.
\end{lemma}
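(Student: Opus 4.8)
The plan is to prove Lemma~\ref{lemma:pi-function} by showing that restricting the $\prec_p$-order to a single $(p-1)$-segment $[b,e]$ amounts to a stable sort by one additional symbol. First I would record the two structural facts that make this work. Fact one: the positions of $X^p$ occupied by the suffixes that lie in $X^{p-1}[b,e]$ are exactly the positions $b, b+1, \ldots, e$ again; this is because any two suffixes, one inside the segment and one outside, are already separated by their $(p-1)$-prefixes (the outside one has a $(p-1)$-prefix strictly smaller or strictly larger than $w_{p-1}(b,e)$), and refining the comparison from $\prec_{p-1}$ to $\prec_p$ never reverses a strict inequality that was already decided on a prefix of length $p-1$. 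Hence $X^p[b,e]$ is a rearrangement of $X^{p-1}[b,e]$, i.e. a permutation of those $e-b+1$ entries. Fact two: for two suffixes $\alpha,\beta$ both having $(p-1)$-prefix $w_{p-1}(b,e)$, the relation $\alpha \prec_p \beta$ is determined by comparing $\alpha[:p]$ and $\beta[:p]$, which — since their first $p-1$ symbols agree — reduces to comparing their $p^{th}$ symbols, with ties (equal $p^{th}$ symbol, or one/both suffixes shorter than $p$ so that the $p^{th}$ symbol is the sentinel) broken exactly as in $\prec_{p-1}$, i.e. first by length and then by string index. Here I would invoke Definition~\ref{def:prec-p-relation} directly: conditions (2) and (3) are literally the tie-break used within the segment once (1) is inconclusive at level $p$.

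Next I would translate Fact two into the language of a stable sort. Inside the segment, list the suffixes in their $X^{p-1}$-order; by Fact one this order is the $\prec_{p-1}$-order restricted to the segment, which by conditions (2)--(3) of $\prec_{p-1}$ is precisely the order by (length, index) among suffixes sharing the common $(p-1)$-prefix. Now sort this list by the key $Q^p[\cdot]$ (the $p^{th}$ symbol of each suffix — note that by the definition of $Q^p$ and the fact $I_{Q^p}=I_{X^{p-1}}$, the entry $Q^p[i]$ really is the $p^{th}$ symbol of the suffix sitting at position $i$ of $X^{p-1}$, and equals \$ when that suffix is shorter than $p$). A \emph{stable} sort keeps equal-key elements in their incoming relative order, i.e. in (length, index) order, which is exactly the $\prec_p$ tie-break for suffixes that still agree after $p$ symbols. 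Therefore the output of the stable sort of $Q^p[b,e]$ coincides entrywise with $X^p[b,e]$. Denoting by $\Pi^{p-1}_{b,e}$ the permutation of $(b,\ldots,e)$ realized by this stable sort — so that the $r^{th}$ output entry came from input position $\Pi^{p-1}_{b,e}[r]$ — we get exactly the statement: $X^p[b,e]$ is the permutation of $X^{p-1}[b,e]$ given by $\Pi^{p-1}_{b,e}$.

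The one subtlety I would be careful about, and what I expect to be the main obstacle, is the handling of suffixes that are shorter than $p$ (including shorter than $p-1$). For such a suffix the ``$p^{th}$ symbol'' is conventionally the sentinel \$, which is the smallest symbol, so in the stable sort it sorts to the front of its $(p-1)$-segment; I must check this is consistent with $\prec_p$. If $\alpha$ has length $<p$ and $\beta$ has length $\ge p$ but $\alpha[:p-1]=\beta[:p-1]=w_{p-1}(b,e)$, then in fact $\alpha$ equals its own $(p-1)$-prefix (a suffix cannot be a proper prefix of itself once we account for distinct sentinels, but within the comparison $\alpha[:p]$ just pads with \$), so $\alpha[:p]\prec\beta[:p]$ lexicographically, giving $\alpha\prec_p\beta$ — matching ``\$ sorts first.'' If both are shorter than $p$ they have equal padded prefixes and are ordered by length then index, again matching the stable tie-break. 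I would also note the degenerate consistency with Definition~\ref{def:p-segment}: a suffix shorter than $p-1$ is alone in its $(p-1)$-segment ($b=e$), so the permutation is trivial and there is nothing to prove for it. Apart from this bookkeeping, the proof is just unwinding the definitions of $\prec_p$, $X^p$, $Q^p$, and stability, so I would keep it short once these points are nailed down.
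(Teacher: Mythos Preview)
Your proposal is correct and follows essentially the same two-step approach as the paper: first argue that suffixes outside $[b,e]$ have strictly smaller or strictly larger $(p-1)$-prefixes and hence keep their relative position under $\prec_p$, so $X^p[b,e]$ is a permutation of $X^{p-1}[b,e]$; then observe that within the segment the $\prec_p$-order is determined by the $p^{th}$ symbol with ties broken by $(\text{length},\text{index})$, which is exactly the incoming $\prec_{p-1}$-order, i.e.\ a stable sort on $Q^p[b,e]$. You are more explicit than the paper about \emph{why} stability gives the correct tie-break and about the sentinel convention for suffixes shorter than $p$; the paper's proof asserts the stable-order conclusion more tersely without spelling out these points.
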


\begin{proof}
First we prove that $X^p[b,e]$ is a permutation of $X^{p-1}[b,e]$. Let us denote with $w$ the $(p-1)$-prefix common to suffixes in $X^{p-1}[b,e]$, and let $i$ be a position in $[b,e]$.
Given a position $q < b$, by definition, the $(p-1)$-prefix $w_q$ of $X^{p-1}[q]$ is strictly smaller than $w$. Then, the $p$-prefix of $X^{p-1}[q]$ is strictly smaller than the $p$-prefix of $X^{p-1}[i]$.
In the same way, given a position $q' > e$ by definition, the $(p-1)$-prefix $w_q'$ of $X^{p-1}[q']$ is strictly greater than $w$. Then, the $p$-prefix of $X^{p-1}[q']$ is strictly greater than the $p$-prefix of $X^{p-1}[i]$. Hence, the set of the suffixes of $X^{p-1}$ before $b$ and the set of the suffixes after $e$ are equal
(respectively) to the set of the suffixes of $X^p$ before $b$ and to the set of the suffixes after $e$, thus deriving that for $b \le i \le e$ the suffix $X^{p-1}[i]$ is equal to $X^p[j]$ for some $j$ in $[b,e]$, completing the proof of the first part.

Furthermore, all suffixes in $X^{p-1}[b,e]$ share the common $(p-1)$-prefix $w$, and therefore their $\prec_p$-order can be determined by ordering their $p^{th}$ symbols. More specifically, the suffix $X^{p-1}[i]$ ($b \le i \le e$) is the $r^{th}$ suffix in $X^p[b,e]$, where $r$ is the rank of its $p^{th}$ character in the stable order of $Q^p[b,e]$.
\end{proof}

Given the suffix in position $i$ of $X^{p-1}$, such that $i$ is in the $(p-1)$-segment $[b,e]$, the Lemma \ref{lemma:pi-function} allows to compute its position $i' \in [b,e]$ on $X^p$. Let $\#^{<}$
be the number of symbols of $Q^p[b,e]$ that are strictly smaller than $Q^p[i]$ and let $\#^{=}_{q}$
be the number of symbols of $Q^p[b,q]$ which are equal to $Q^p[i]$. Then, the rank of suffix $X^{p-1}[i]$ in $X^p[b,e]$ is $r=\#^{<} + \#^{=}_{i}$,
thus deriving that its position in $X^p$ is $i' = b + r -1$. %= b + S(i) + E(i,i) - 1$.
%Moreover, it is possible to prove that $j$ belongs to the $p$-segment $[b',e']$ of $X^p$ such that $b' = b + S(i)$ and $e' = b' + E(i,e)$. The suffixes $X^p[b',e']$ share the $p$-prefix $w_{p-1}(b,e) Q^p[i]$, where $w_{p-1}(b,e)$ is the $(p-1)$-prefix common to suffixes $X^{p-1}[b,e]$.
It is possible to notice that the positions $(b, b+1, \cdots, e)$ on $X^p$ are partioned into $n$ $p$-segments $[b,e_1], \cdots, [b_n, e]$ (referred as \emph{induced} by the $(p-1)$-segment $[b,e]$ of $X^{p-1}$), where $n$ is the number of distinct non-\$ symbols in $Q^p[b,e]$ plus the number $\#_\$$ of symbols \$ in $Q^p[b,e]$.
Observe that the first $\#_\$$ $p$-segments $[b_1,e_1], \cdots, [b_{\#_\$}, e_{\#_\$}]$ have width $1$, while the width of the last $n-\#_\$$ $p$-segments $[b_{\#_\$+1}, e_{\#_\$+1}], \cdots, [b_n, e_n]$ can be computed as follows.
Let $\{c_1, \cdots, c_{n-\#_\$}\}$ be the ordered set of the distinct non-\$ symbols in $Q^p[b,e]$. Then, the width of $[b_{\#_\$+i}, e_{\#_\$+i}]$ ($1 \leq i \le n-\#_\$$) is equal to the number of occurrences of the symbol $c_i$ in $Q^p[b,e]$.
From what described above, it derives that the $p$-segments on $X^p$ form a partition of its positions $(1, \cdots , (k+1)m)$ that is a refinement of the partition formed by the $(p-1)$-segments on $X^{p-1}$.

Now, we describe a simple procedure (see Algorithm~\ref{alg:merge-compute-precp-order}) to compute $X^p[b,e]$ from the $(p-1)$-segment $[b,e]$ of $X^{p-1}$. The procedure uses $\sigma+1$ (initially empty) lists $L_{c_0}, \cdots, L_{c_{\sigma}}$.  %that is, a list for each one of the symbols in the alphabet $\Sigma$.
Each position $i \in [b,e]$ is considered from $b$ to $e$ and each suffix $X^{p-1}[i]$ is appended to the list $L_c$ such that $c$ is the $p^{th}$ symbol of $X^{p-1}[i]$.
Afterwards, each list of the sequence $\langle L_{c_0}, \cdots, L_{c_{\sigma}} \rangle$ is read sequentially, and each suffix $x$ in position $r_h$ of the list $L_{c_h}$ is put in the position $b + r -1$ of $X^p$, where $r$ is given by the total size of the lists $L_{c_0}, \cdots, L_{c_{h-1}}$ (which have been previously read) plus the position $r_h$ of $x$. Observe that $r$ contains the rank of $x$ in the $\prec_p$-ordering of the suffixes of the $(p-1)$-segment.

\begin{algorithm}[htb!]
$L_{c_0}, L_{c_1}, \cdots, L_{c_{\sigma}} \gets $ empty lists\;

\For{$i \gets b$ to $e$}{%
  $c \gets Q^p[i]$\;
  Append $X^{p-1}[i]$ to $L_c$\;
}

$r \gets 1$\;

\For{$h \gets 0$ to $\sigma$}{%
  \For{$r_h \gets 1$ to $|L_{c_h}|$}{%
    $x \gets L_{c_h}[r_h]$\;			
    $X^p[b + r - 1] \gets x$\;
    $r \gets r + 1$\;
  }
}
%\caption{Compute $X^{p}[b,e]$ from the $(p-1)$-segment $[b,e]$ on $X^{p-1}$ }
\caption{Compute $p$-segment on $X$ from a $(p-1)$-segment}
\label{alg:merge-compute-precp-order}
\end{algorithm}

Algorithm~\ref{alg:merge-compute-precp-order} can be easily modified in order to produce
also the $p$-segments $[b, e_1], [b_2, e_2], \cdots, [b_n ,e]$ 
induced by the $(p-1)$-segment $[b,e]$.
Observe that the first $|L_{c_0}|$ $p$-segments have width $1$, while for the last
$n-|L_{c_0}|$ $p$-segments it is easy to prove that $e_i = b + T_i -1$ where $T_i$ is the
total size of the first $i-|L_{c_0}|+1$ nonempty lists $L_{c_0}, L_{c_1}, \cdots,
L_{c_{\sigma}}$. 
The entire interleave $X^p$ is obtained by computing $X^p[b,e]$ for each distinct $(p-1)$-segment $[b,e]$ of $X^{p-1}$. 

At this point, it is immediate to extend the definition of $p$-segment $[b,e]$ from $X^{p}$ to its encoding $I_{X^{p}}$, and to see that the Algorithm~\ref{alg:merge-compute-precp-order} can be slightly modified to compute $I_{X^p}[b,e]$ from the $(p-1)$-segment $[b,e]$ of $I_{X^{p-1}}$ (see Algorithm~\ref{alg:merge-compute-precp-order-on-encoding}).

\begin{algorithm}[htb!]
%	\SetKwInOut{Input}{Input}\SetKwInOut{Output}{Output}
%	\Input{}
%	\Output{}
	
	$L_{c_0}, L_{c_1}, \cdots, L_{c_{\sigma}} \gets $ empty lists\;
	
	\For{$i \gets b$ to $e$}{%
		$c \gets Q^p[i]$\;
		Append $I_{X^{p-1}}[i]$ to $L_c$\;
	}
	
	$r \gets 1$\;
	
	\For{$h \gets 0$ to $\sigma$}{%
		\For{$r_h \gets 1$ to $|L_{c_h}|$}{%
			$j \gets L_{c_h}[r_h]$\;			
			$I_{X^p}[b + r - 1] \gets j$\;
			$r \gets r + 1$\;
		}
	}
	
%	\caption{Compute $I_{X^{p}}[b,e]$ from the $(p-1)$-segment $[b,e]$ on $I_{X^{p-1}}$}
        \caption{Compute $p$-segment on $I$ from a $(p-1)$-segment}
	\label{alg:merge-compute-precp-order-on-encoding}
\end{algorithm}

Based on Algorithm~\ref{alg:merge-compute-precp-order-on-encoding} we designed the iterative procedure \textbf{Merge-suffixes} (see Algorithm~\ref{alg:merge-suffixes}) to compute the encoding $I_{X^k}$ starting from the encoding $I_{X^0}$ that can be easily obtained as explained before.
Recall that $I_{X^k}$ is the encoding of the interleave of the arrays $B_0, \cdots, B_k$ giving the BWT $B$ of the input set $S$.
The iteration $p$ of the procedure computes $I_{X^{p}}$ from  $I_{X^{p-1}}$, 
by scanning the array $I_{X^{p-1}}$, and is detailed in Algorithm~\ref{alg:merge-compute-iteration}. Precisely, the procedure, for each $(p-1)$-segment $[b,e]$, computes the portion $I_{X^p}[b,e]$ of $I_{X^p}$. We point out that it is not actually necessary to reconstruct the interleave $Q^p$ from the arrays $Q^p_0, Q^p_1, \cdots, Q^p_k$, since its encoding is $I_{X^{p-1}}$, and therefore a scan of $I_{X^{p-1}}$ allows also to simulate a scan of $Q^p$ (see Algorithm~\ref{alg:interleave-encoding}). %Flag \emph{pick\_up\_start} indicates if the current position on $I_{X^{p-1}}$ is the \emph{start} position of a $(p-1)$-segment.

\begin{algorithm}[htb!]
%	\SetKwInOut{Input}{Input}\SetKwInOut{Output}{Output}
%	\Input{}
%	\Output{}
	
	$L_{c_0}, L_{c_1}, \cdots, L_{c_{\sigma}} \gets $ empty lists\;
	\For{$j\gets 0$ to $k$}{%
		$rank[j] \gets 0$\;
	 }
	
	$pick\_up\_start \gets true$\;
	
	\For{$i\gets 1$ to $(k + 1)m$}{%
		\If{pick\_up\_start = true}{
			$b \gets i$\;
			$pick\_up\_start = false$\;
		}
		$j \gets I_{X^{p-1}}[i]$\;
		$rank[j] \gets rank[j] + 1$\;
		$c \gets Q_{index}^{p}[rank[j]]$\;
		Append $j$ to $L_c$\;
		%$E^{p}[i] \gets E^{p-1}[i]$\;
        \If{$i$ is the end position of a $(p-1)$-segment}{\label{alg:merge-compute-iteration:pick-up-start}
		%\If{$E^{p-1}[i] = true$}{\label{alg:merge-compute-iteration:pick-up-start}
			$pick\_up\_start = true$\;
			$r \gets 1$\;
				
			\For{$h \gets 0$ to $\sigma$}{%
				\For{$r_h \gets 1$ to $|L_{c_h}|$}{%
					$j \gets L_{c_h}[r_h]$\;			
					$I_{X^p}[b + r - 1] \gets j$\;
					\If{$r_h > 1$ and $h >0$}{
						$Lcp_p[b+r-1]=p$\; \label{alg:merge-compute-iteration:update-lcp}
					}\Else{
						$Lcp_p[b+r-1]=Lcp_{p-1}[b+r-1]$\;
					}
					$r \gets r + 1$\;
				}
			}
			$L_{c_0}, L_{c_1}, \cdots, L_{c_{\sigma}} \gets $ empty lists\;
		}
	}
		
	\caption{Compute $I_{X^{p}}$ from $I_{X^{p-1}}$}
	\label{alg:merge-compute-iteration}
\end{algorithm}

The conditions at line~\ref{alg:merge-compute-iteration:pick-up-start} of Algorithm~\ref{alg:merge-compute-iteration} and at line~\ref{alg:merge:outer-loop-begin} of Algorithm~\ref{alg:merge-suffixes} are checked by using an auxilary binary array $E^{p-1}$ storing the $(p-1)$-segments. More specifically, $E^{p-1}[i]$ is \emph{true} iff $i$ is the \emph{end} position of some $(p-1)$-segment. The array $E^{p-1}$ is sufficient to reconstruct the set of all $(p-1)$-segments since they form a partition of positions $(1, \cdots, (k+1)m)$, and it is read sequentially with the other arrays. For the sake of brevity the computation of $E^{p-1}$ (of each iteration $p$) is omitted.

Observe that, under the assumption that the input set $S$ does not contain duplicates,  all the $k$-segments of the encoding $I_{X^k}$ have width equal to $1$.
Moreover, after $L$ iterations, where $L$ is the length of the longest common substring of two strings in $S$, (1) the encoding $I_{X^L}$ is equal to $I_{X^k}$ and (2) each $I_{X^j}$ with $j > L$ is identical to $I_{X^L}$. 
Those two facts are a consequence of the following two observations:  (i) the length $p$ of the longest common prefix between two strings is equal to the length of the longest common substring in $S$, if all the $(p+1)$-prefixes of the suffixes are distinct, (ii) the  $\prec_{p+1}$ order relation does not effect the ordering given by $I_{X^p}$, that is $I_{X^{p+1}}=I_{X^{p}}$.

Algorithm~\ref{alg:merge-compute-precp-order-on-encoding} computes also the LCP array whose description is in the following Section~\ref{sec:computing-lcp-array}.
Section~\ref{sec:parte-su-q_p} is devoted to describe how to compute the arrays $Q_{l}^{p}$ used by iteration $p$.

\begin{algorithm}[htb!]
\SetKwInOut{Input}{Input}\SetKwInOut{Output}{Output}
\Input{The arrays $B_0, B_1, \cdots, B_k$}
\Output{The encoding $I_{X^k}$.}
\For{$l \gets 0$ to $k$\label{alg:merge:preprocessing-start}}{%
  \For{$i \gets 1$ to $m$}{%
    $I_{X^0}[l m+i] \gets l$;
    %$End_{0}[lm+i] \gets $ false;
    $Lcp[lm+i] \gets 0$\;
  }
}
%$End_{0}[(k + 1)m] \gets $ true\label{alg:merge:preprocessing-end}\;
Compute lists $Q^{1}_{l}$ for $0 \leq l \leq k$\;
$p\gets 1$\;
\While{there exists some $(p-1)$-segment on $I_{X^{p-1}}$ which is wider than $1$ \label{alg:merge:outer-loop-begin}}{%
  Compute $I_{X^p}$ from $I_{X^{p-1}}$\;
  Compute lists $Q^{p+1}_{l}$ for $0 \leq l \leq k$\;
}
Output $I_{X^p}$\;
\caption{Merge-suffixes}
\label{alg:merge-suffixes}
\end{algorithm}

\subsection{Computing the LCP array}
\label{sec:computing-lcp-array}

%Algorithm~\ref{alg:merge-suffixes}  also computes the LCP array by exploiting
%Proposition~\ref{proposition:lcp-with-segments} which easily  follows from the definition of segment of an interleave.

The LCP array is obtained by exploiting
Proposition~\ref{proposition:lcp-with-segments} which easily follows from the definition of $p$-segment.

\begin{proposition}
	\label{proposition:lcp-with-segments} 
	Let $i$ be a position on the LCP array $LCP$.    
	Then $LCP[i]$ is the largest $p$ such that $i$ is the start of a $(p+1)$-segment (of $I_{X^{p+1}}$) and is not the start of a $p$-segment (of $I_{X^p}$).
\end{proposition}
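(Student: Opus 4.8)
The plan is to unwind the definitions of LCP and of $p$-segment, and show that the two conditions in the statement pin down exactly the length of the longest common prefix of the two suffixes $X[i-1]$ and $X[i]$. First I would recall that, since the $p$-segments of $I_{X^p}$ refine those of $I_{X^{p-1}}$ (this refinement fact is established right after Lemma~\ref{lemma:pi-function}), the family of $p$-segments containing a fixed position $i$, indexed by $p=0,1,\dots,k$, is a nested decreasing sequence of intervals. Consequently, for the position $i$, there is a well-defined threshold: as $p$ grows, the $p$-segment of $i$ eventually has $i$ as its begin position and stays that way. I want to identify that threshold with $LCP[i]$.

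The key observation is the following translation between segments and common prefixes. If $i$ is \emph{not} the start of the $p$-segment $[b,e]$ (so $b<i$), then $X^p[i-1]$ and $X^p[i]$ lie in the same $p$-segment, hence share a common $p$-prefix, so their longest common prefix has length at least $p$; conversely, if $i$ \emph{is} the start of the $p$-segment of $i$, then $X^p[i-1]$ lies in the previous $p$-segment and $X^p[i]$ in the current one, and by Definition~\ref{def:p-segment} together with the $\prec_p$-ordering (Definition~\ref{def:prec-p-relation}) these two suffixes have \emph{distinct} $p$-prefixes — whence their longest common prefix has length strictly less than $p$, i.e. at most $p-1$. Since $X^k=X$ and the interleave encodings stabilize, for $p$ and $p+1$ near $k$ (concretely for $p\ge L$, by the stabilization remark after Algorithm~\ref{alg:merge-compute-iteration}) we may read these statements off of $X$ itself: being or not being the start of a $p$-segment is a property of $I_{X^p}$ that, once $p\ge L$, no longer changes. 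Combining the two directions: if $p$ is the largest integer such that $i$ is the start of a $(p+1)$-segment but not of a $p$-segment, then from "not the start of a $p$-segment" we get $LCP[i]\ge p$, and from "the start of a $(p+1)$-segment" we get $LCP[i]\le p$, so $LCP[i]=p$.

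The steps, in order, are: (1) state and use the nesting/refinement of $p$-segments to argue the threshold $p^\ast(i):=\max\{p : i \text{ is not the start of its }p\text{-segment}\}$ is well defined (with the degenerate conventions when $i=1$ or when $i$ is always a segment start, which should be checked against $LCP[1]=-1$ and against positions where the two neighbouring suffixes share no symbol); (2) prove "$i$ not a $p$-segment start $\Rightarrow$ common $p$-prefix of $X[i-1],X[i]$ $\Rightarrow$ $LCP[i]\ge p$"; (3) prove "$i$ a $p$-segment start $\Rightarrow$ $X[i-1]$ and $X[i]$ have different $p$-prefixes $\Rightarrow$ $LCP[i]\le p-1$"; (4) observe that "$i$ is the start of a $(p+1)$-segment and not of a $p$-segment" forces, by nesting, that $i$ is a $q$-segment start for all $q>p$ and not a $q$-segment start for all $q\le p$, so such a $p$ is unique and equals $p^\ast(i)$; (5) conclude $LCP[i]=p^\ast(i)=p$. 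The main obstacle I anticipate is step (3): one must be careful that when the shorter of the two suffixes $X[i-1],X[i]$ has length less than $p$, the notion of "$p$-prefix" still behaves correctly — here the convention built into Definition~\ref{def:prec-p-relation}, that a shorter suffix with a matching prefix $\prec_p$-precedes a longer one and sits alone in its $p$-segment (noted right after Definition~\ref{def:p-segment}), is exactly what makes the argument go through, and I would make that case explicit rather than sweep it under the rug. The boundary conventions ($LCP[1]=-1$, and suffixes that differ in their very first symbol, where $p^\ast(i)=0$) also need a sentence each to confirm consistency.
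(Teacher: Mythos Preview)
Your proposal is correct and follows essentially the same route as the paper: use the refinement of $(p+1)$-segments over $p$-segments to get uniqueness of $p$, then argue in both directions that ``$i-1$ and $i$ in the same $p$-segment'' is equivalent to ``$X[i-1]$ and $X[i]$ share a $p$-prefix'', which pins down $LCP[i]=p$. The paper's proof is terser (it does not spell out the boundary cases you flag), but the skeleton is identical.

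One small remark: your detour through stabilization (``for $p\ge L$ we may read these statements off of $X$ itself'') is not needed and slightly obscures the point. What actually lets you pass from $X^p$ to $X$ for \emph{every} $p$ is Lemma~\ref{lemma:pi-function}: within each $p$-segment $[b,e]$, the arrays $X^p,X^{p+1},\ldots,X^k=X$ are permutations of one another on $[b,e]$, so the \emph{set} of suffixes occupying positions $[b,e]$---and hence their common $p$-prefix---is the same in $X^p$ and in $X$. That is the (implicit) bridge the paper uses too; invoking it directly makes your steps (2) and (3) go through for arbitrary $p$ without waiting for stabilization.
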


\begin{proof}
Notice that, since the $(p+1)$-segments are a refinement of the
$p$-segments, then there can be only one such $p$.    
Let  $\alpha_{i-1}$ and $\alpha_{i}$ be respectively the $(i-1)^{th}$
and the $i^{th}$ lexicographically smallest suffix of $S$.    
Assume initially that $i$ is the start of  a $(p+1)$-segment, but not
of a $p$-segment.
Since $i$ is not a start of a $p$-segment, then $i-1$ and $i$ belong
to the same $p$-segment hence, by definition of segment, they share
the same $p$-prefix.    
Since $i$ is the start of a $(p+1)$-segment, then $i-1$ and $i$ cannot belong
to the same $(p+1)$-segment, hence they do not share
the same $(p+1)$-prefix.    
Thus, $LCP[i] = p$.    
Assume now that $LCP[i] = p$, that is $\alpha_{i-1}$ and $\alpha_{i}$
share a common $p$-prefix, but not a $(p+1)$-prefix.    
Again, by definition of segment, $i-1$ and $i$ belong to the same
$p$-segment but not to the same $(p+1)$-segment.    
\end{proof}

At this point, let $Lcp_p$ be the $(k+1)m$-long array such that $Lcp_p[i]$ is the length of the longest common prefix between the $p$-prefix of suffix $X^p[i]$ 
and the $p$-prefix of suffix $X^p[i-1]$. The array $Lcp_k$ is clearly equal to the LCP array of the input set $S$.

Each iteration $p$ (see Algorithm~\ref{alg:merge-compute-precp-order-on-encoding}) of our procedure computes $Lcp_p$ from $Lcp_{p-1}$ and the array $Lcp_0$ is set to all $0$s before starting the iterations. The following invariant, which directly implies its correctness, is maintained.

\begin{lemma}  
\label{lemma:lcp-invariant}
At the end of iteration $p$, $Lcp_p[i]=p$ iff $i$ is not the start position of any $p$-segment.    
\end{lemma}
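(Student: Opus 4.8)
The plan is to argue by induction on $p$, comparing a single iteration of Algorithm~\ref{alg:merge-compute-iteration} with the combinatorial structure of the $p$-segments supplied by Lemma~\ref{lemma:pi-function} and the discussion following it. For the base case $p=0$, recall that $Lcp_0$ is initialised to all $0$s and that $X^0$ consists of the single $0$-segment $[1,(k+1)m]$; hence every position $i\ge 2$ is a non-start and satisfies $Lcp_0[i]=0=p$, as required, while position $1$ (the unique start of a $0$-segment) is handled by the convention $LCP[1]=-1$, so the statement is understood for $i\ge 2$.

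For the inductive step, assume the invariant holds at the end of iteration $p-1$ and fix a $(p-1)$-segment $[b,e]$ of $I_{X^{p-1}}$; the iteration processes $[b,e]$ independently of the other $(p-1)$-segments. By the discussion following Lemma~\ref{lemma:pi-function}, the positions $b,\dots,e$ of $I_{X^p}$ split into $\#_\$$ width-$1$ $p$-segments (one per suffix of $X^{p-1}[b,e]$ whose $p$-th symbol is $\$$, i.e.\ each suffix shorter than $p$), followed by one $p$-segment per distinct non-$\$$ symbol $c_h$ occurring in $Q^p[b,e]$, of width equal to the multiplicity of $c_h$. Matching this against the code: the element written to position $b+r-1$ after being appended to $L_{c_h}$ at local rank $r_h$ is the \emph{start} of an induced $p$-segment exactly when $h=0$ (a width-$1$ sentinel segment) or $r_h=1$ (the first occurrence of $c_h$, for $h>0$) — precisely the complement of the guard ``$r_h>1$ and $h>0$'' of line~\ref{alg:merge-compute-iteration:update-lcp}. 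In particular the start $b$ of the $(p-1)$-segment receives $r=1$, hence $r_h=1$, so it is recognised as the start of the first induced $p$-segment; this is consistent with the fact (established earlier) that the $p$-segments refine the $(p-1)$-segments.

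It remains to check the two assigned values. If $i\in[b,e]$ is \emph{not} the start of any $p$-segment, then it lies strictly inside a width-$>1$ block of some $c_h$ with $h>0$ and $r_h>1$, so the ``if'' branch sets $Lcp_p[i]=p$. If $i$ \emph{is} the start of some $p$-segment, the ``else'' branch sets $Lcp_p[i]=Lcp_{p-1}[i]$; but $Lcp_{p-1}[i]$ is by definition the length of a common prefix of two $(p-1)$-prefixes, hence at most $p-1$, so $Lcp_p[i]\neq p$. Since every position of $(1,\dots,(k+1)m)$ lies in exactly one $(p-1)$-segment and is therefore written exactly once during the iteration, these two cases cover all positions, which establishes the invariant at the end of iteration $p$.

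The main obstacle is the bookkeeping around short suffixes and segment boundaries rather than any deep argument: one must verify that (i) every suffix shorter than $p$ generates its own width-$1$ $p$-segment and is classified as a ``start'' through the $h=0$ clause, (ii) the start of a $(p-1)$-segment is recognised as the start of the first induced $p$-segment, so it retains its old $Lcp$ value rather than being overwritten with $p$, and (iii) position $1$ is treated via the $LCP[1]=-1$ convention. Note that the inductive hypothesis is in fact barely used — the refinement structure of the segments and the correctness of the stable sort on $Q^p$ are already guaranteed by Lemma~\ref{lemma:pi-function} and its surrounding discussion, and the bound $Lcp_{p-1}[\cdot]\le p-1$ needed for the ``start'' case is immediate from the definition of $Lcp_{p-1}$ — so once the case analysis above is carried out carefully the statement follows.
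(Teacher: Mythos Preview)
Your proposal is correct and follows essentially the same inductive approach as the paper, which also fixes a $(p-1)$-segment $[b,e]$ and observes that the algorithm writes $Lcp_p[i]=p$ exactly at the non-start positions of the induced $p$-segments; your treatment is in fact more thorough than the paper's terse argument, since you spell out both directions of the iff and the sentinel and boundary cases. One small remark: you justify $Lcp_{p-1}[i]\le p-1$ by appealing to the semantic definition of $Lcp_{p-1}$, but since the lemma is precisely what ties the computed array to that definition, the non-circular justification is that the code only ever assigns the value $q$ (or copies from $Lcp_{q-1}$) at iteration $q$, so a trivial induction on the algorithm gives the bound --- the paper obtains the analogous fact from the inductive hypothesis instead.
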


\begin{proof}
We will prove the lemma by induction on $p$.
Before the first iteration, the array $Lcp_0$ is set to all $0$s, therefore we only have to consider the general case.
%
%At the beginning of each iteration $p$, $Lcp[i]$ is the length of the longest common prefix between the $(p-1)$-prefix of suffix related to position $i$ of $I_{p-1}$ and the $(p-1)$-prefix of suffix related to position $i-1$ of $I_{p-1}$. 
Observe that (at the beginning of iteration $p$), given a $(p-1)$-segment $[b,e]$, we have $Lcp_{p-1}[i]=p-1$ for $b+1 \le i \le e$. Then, the procedure (see Algorithm~\ref{alg:merge-compute-precp-order-on-encoding}) sets to $p$ the array $Lcp_p$ in all positions of the induced $p$-segments different from their start positions (line~\ref{alg:merge-compute-iteration:update-lcp}), completing the proof.
\end{proof}

\subsection{Computing the $Q_{l}^{p}$ arrays}
\label{sec:parte-su-q_p}

%Given an interleave $I_p$, we denote with $sym_p(i)$ the symbol preceding the $I_p[i]$-suffix related to position $i$ of $I_p$; such symbol can be easily computed as described before (for reconstructing the BWT $B$ from the BWT-interleave $I_k$). In fact, $sym_p(i)$ is equal to $B_{I_p[i]}[r]$, where $r$ is the number in $I_p[1 .. i]$ of values equal to $I_p[i]$.

%In general, given an interleave $I_p$ we call \emph{$p$-segment} [RR: non riesco a trovare nome migliore] a maximal interval $[b,e)$ (possibly with width $e-b$ equal to $1$) such that the suffixes related to positions in $[b,e)$ have the same $p$-prefix $Q$. It is possible to prove that $[b,e)$ is also the $Q$-interval on the BWT $B$ and the sequence of symbols $\langle sym_p(b), \ldots, sym_p(e-1) \rangle$ is a permutation of $B[b,e)$.

In this section we describe how to compute the arrays $Q^p_0, Q^p_1, \cdots, Q^p_k$ used by iteration $p$. Recall that $Q^p_l$ is the $m$-long array such that $Q^p_l[i]$ is the
$p^{th}$ symbol of the $i^{th}$ smallest $l$-suffix of $X_l$
($Q_{l}^{p}[i]$ is a sentinel \$ if the suffix is smaller than $p$). 
The following proposition establishes a recursive definition of $Q_{l}^{p}$.

\begin{lemma}
\label{lemma:ordering-suffix}
Let $X_{l}$ and $X_{l -1}$ be respectively the sorted
$l$-suffixes and $(l -1)$-suffixes of the set $S$. 
Let $\alpha_{l}$ and $\alpha_{l -1}$ be respectively the
$l$-suffix and the $(l-1)$-suffix of a generic input string $s_{i}$.    
Then the $p^{th}$ symbol of  $\alpha_{l}$ is the $(p-1)^{th}$ symbol of  $\alpha_{l -1}$.    
\end{lemma}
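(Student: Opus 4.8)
The plan is to unwind the definitions of the $l$-suffix and $(l-1)$-suffix of a string and observe that one is a suffix of the other, shifted by exactly one position. First I would write $\alpha_l = s_i[k-l+1:k]$ and $\alpha_{l-1} = s_i[k-(l-1)+1:k] = s_i[k-l+2:k]$ directly from the definition of the $l$-suffix given in Section~\ref{sec:definitions}. The key observation is that $\alpha_{l-1}$ is obtained from $\alpha_l$ by deleting its first symbol, i.e. $\alpha_l = s_i[k-l+1]\cdot\alpha_{l-1}$, so that position $p$ of $\alpha_l$ coincides with position $p-1$ of $\alpha_{l-1}$ for every $p \ge 2$: indeed $\alpha_l[p] = s_i[(k-l+1)+(p-1)] = s_i[(k-l+2)+(p-2)] = \alpha_{l-1}[p-1]$.

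I would then handle the boundary cases that are implicit in the statement because of the convention on the $p$-prefix (a suffix shorter than $p$ behaves, for ranking purposes, as if padded with the sentinel). If $l - 1 < p - 1$, i.e. $\alpha_{l-1}$ is shorter than $p-1$, then $\alpha_l$ has length $l < p$, so the $p^{th}$ symbol of $\alpha_l$ is (by the convention used for $Q_l^p$) the sentinel \$; and likewise the $(p-1)^{th}$ symbol of $\alpha_{l-1}$ is \$, so the two agree. If $l - 1 = p - 1$ exactly, then $\alpha_l$ has length $p$, so its $p^{th}$ (last) symbol is $s_i[k]$, and $\alpha_{l-1}$ has length $p-1$, so its $(p-1)^{th}$ (last) symbol is also $s_i[k]$; again they agree. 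In the remaining case $l-1 > p-1$ both symbols are genuine characters of $s_i$ and the index computation of the previous paragraph applies verbatim.

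This lemma is essentially a bookkeeping statement about indices, so there is no real obstacle; the only thing to be careful about is that the lemma is stated for arbitrary $p$ (including $p$ larger than the suffix lengths), so the argument must explicitly invoke the sentinel-padding convention for $Q_l^p$ rather than treating $\alpha_l$, $\alpha_{l-1}$ purely as strings. With that convention in place, the case analysis above covers all situations and the lemma follows. Note that, combined with the identity $B_{l-1}[N_{l-1}[i]] = s_i[k-l+1]$ coming from the definition of $N_{l-1}$ and $B_{l-1}$, this gives the recursive handle on $Q_l^p$ that the procedure of Section~\ref{sec:parte-su-q_p} needs: the $p^{th}$ symbol attached to an $l$-suffix can be read off from the $(p-1)^{th}$ symbol attached to the corresponding $(l-1)$-suffix.
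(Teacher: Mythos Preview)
Your proposal is correct. The paper does not actually supply a proof of this lemma at all: it is stated and immediately used, being treated as an obvious bookkeeping fact about indices (which it is). Your argument---writing $\alpha_l = s_i[k-l+1:k]$, $\alpha_{l-1} = s_i[k-l+2:k]$ and reading off $\alpha_l[p] = \alpha_{l-1}[p-1]$---is exactly the intended one-line justification, and your explicit handling of the sentinel-padding cases is more careful than anything the paper provides.
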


Since the suffixes  $\alpha_{l}$ and $\alpha_{l -1}$ can have
different positions in $X_{l}$ and
$X_{l-1}$, the list $Q_{l}^{p}$ is a permutation of $Q_{l -1}^{p-1}$.    
Still, Algorithm~\ref{alg:shift-succs} exploits the construction of $Q_{l -1}^{p-1}$ to
quickly compute  $Q_{l}^{p}$.
Notice that, for $l \geq 1$, $Q^1_l$ is the result of sorting $B_{l-1}$ whereas for $l=0$,
$Q^1_0$ is a sequence of sentinels %\notaestesa{RR}{Verificare}.
Therefore the arrays $Q^1_0, Q^1_2, \cdots, Q^1_k$ can be trivially computed.

\begin{algorithm}[htb!]
	\SetKwInOut{Input}{Input}\SetKwInOut{Output}{Output}
	\Input{The lists $B_{0}, \cdots, B_{k}$ on alphabet $c_0, \cdots , c_\sigma$, an integer $p$ with $2\le p\le k$, and all $Q_l^{p-1}$.}
	\Output{The lists $Q_{l}^{p}$ for each $k\ge l\ge p$}
    %\tcp{Initialization of lists $Q^{next}}\;
	\For{$l\gets p$ to $k$\label{alg:shift-succs:begin-init}}{%
    	$Q_l^{p} \gets$ empty list\;
	  	\For{$h\gets 0$ to $\sigma$}{%
            $Q_l^{p}(c_h) \gets$ empty list\label{alg:shift-succs:end-init}\;
        }
    	\For{$j\gets 1$ to $m$}{%
         	Append $Q_{l-1}^{p-1}[j]$ to $Q_l^{p}(B_{l-1}[j])$\;
      	}
	  	\For{$h\gets 0$ to $\sigma$}{%
        	Append $Q_l^p(c_h)$ to $Q^{p}_{l}$\label{alg:shift-succs:end-shift}\;      
		}
	}
	\caption{Compute all lists $Q_l^p$ for any given $p\ge 2$.}
	\label{alg:shift-succs}
\end{algorithm}

% Let $T$ be a sequence $\langle T[1], \ldots , T[m]\rangle$ of $m$
% distinct elements, and let $A$ 
% be list of $m$ (not necessarily distinct) characters.    
% Then $sort(A, T)$ is the permutation $T_{s}$ of $T$ such that for
% each two elements $T[y]$, $T[z]$ of $T$, $T[y]$  precedes
% $T[z]$ in $T_{s}$ iff $A[y]$ is smaller than $A[z]$, or (2)
% $A[y]=A[z]$ and $y<z$.    
% Algorithm~\ref{alg:sort} computes in external memory $sort(A, T)$ by
% stringing sequentially once $A$ and $T$ and outputting sequentially
% the list $sort(A, T)$.

%\notaestesa{RR}{Spiegare perch\'e si calcolano gli array $Q^p_l$ solo per $l \ge p$}

In order to prove the correctness of Algorithm~\ref{alg:shift-succs} we need to show that the  permutation $St^{l-1}$ over indexes $1, \cdots, m$ of  $B_{l -1}$  induced by the lexicographic ordering of $B_{l -1}$, is the correct permutation  of $Q_{l -1}^{p-1}$ to obtain   $Q_{l}^{p}$.
Indeed, observe that $St^{l -1}$ is the permutation that relates positions of indexes of strings in $X_{l-1}$ to their positions in $X_l$. More precisely, given a string $s_q$ of $S$, such that its $(l-1)$-suffix is in position $j$ of list $X_{l-1}$, then if $St^{l-1}[j] =t$, it means that the $l$-suffix is of the string $s_q$ is in position $t$ of list $X_l$.

The above observation is a consequence of the fact that in order to get the lexicographic ordering of $X_l$ from the list $X_{l-1}$ we simply sort the $(l-1)$-suffixes by the first symbol that precedes them, i.e., they are sorted by the list $B_{l -1}$.

\section{Complexity}
% The final ingredient of our algorithm is how to compute the BWT from the final interleave $I_{k}$.    
% The following lemma shows the relation between the $\phi$ function and the BWT.    

% \begin{lemma}  
% \label{lemma:bwt-interleave}
% Let $B$ be the BWT of a set $S$ of strings, let $I$ be the interleave associated to $S$,
% and let $i$ be a position.    
% Then $B[i] = B_{I[i]}[h]$, where $h$ is equal to $rank_{I}(I[i], i)$.    
% \end{lemma}

% Each interleave vector $I_p$ can be associated to an interleave of the vectors $B_0, \ldots , B_k$.
% The main idea is that there exists an integer $p$ such that any vector $I_j$ with $j > p$ is equal to vector $I_p$ which means that we have computed after $p$ iterations the correct interleave of $B_0, \ldots , B_k$ providing the BWT $B$ of the collection of strings.

% Let us first associate to a vector $I_j$ the unique interleave of the vectors $B_0, \ldots , B_k$ as defined below.

% Given  $I_j$, then $Bw(I_j)$ is the \emph{$\sigma$-interleave associated to $I_j$} which is defined as:

% $Bw(I_j)[i]= B_{c}[t]$, where $I_j[i]=c$ and $t$ is the rank of symbol $c$ in the portion $I_{j}[1...i]$ of  vector $I_j$, that is $t = rank_{I_j}(c, i)$.
First of all, notice that %the arrays $B_l$, $End_l$, $Q_l^p$, $Lcp$, and $End_l$ are accessed sequentially,
all the arrays are accessed sequentially,
therefore they can be stored in external files and it is immediate to view our procedure as an external memory approach, where only
the arrays $S_l$ ($0 \leq l \leq k-1$) of the symbols of the input strings are kept in main memory (together with some additional data structures).

First we will consider the procedure Partition-suffixes which mainly consists of $k$ iterations.
At iteration $l$, the coordinated scans of $N_{l-1}$ and $B_{l-1}$ suffice to construct
$N_l$ and successively $B_l$.
Notice that we keep array $S_{l}$ in main memory, since it is the only array that we access randomly.
This procedure has $O(km)$ I/O complexity and a matching running time.
Moreover, keeping the array $S_{l}$ in main memory requires $O(m\log \sigma)$ space.

Mainly the procedure Merge-suffixes is a loop where each iteration $p$ consists of a coordinated scan of the arrays $I_{X^{p-1}}$, $E^{p-1}$ and $Lcp_{p-1}$, and of the arrays $Q_l^p$ for $0 \leq l \leq k$, as well as writing the arrays $I_{X^p}$, $E^{p}$ and $Lcp_p$, and 
computing all the $Q_l^{p+1}$ arrays.
Both $Lcp_{p-1}$ and $I_{X^{p-1}}$ have $O(km)$ elements, each requiring $O(\log k)$ space, therefore their scan
implies $O(km\log k)$ I/O complexity.
The $E^p$ array has $O(km)$ elements, each requiring $O(\log \sigma)$ space, which implies an 
$O(km\log \sigma)$ I/O complexity.
There are at most $k$ $Q_l^p$ arrays, each consisting of $m$ elements where each element requires $O(\log \sigma)$ space, which implies an $O(km\log \sigma)$ I/O complexity.
There are some additional data structures, whose I/O complexity is smaller than the $O(km(\log k +\log \sigma))$ of the other parts.

The only relevant data structure of Merge-suffixes that must reside in main memory is the $rank$ array, which has $k$ elements, each requiring $O(\log k)$ space.

The last component of our complexity analysis is the number of iterations of Merge-suffixes.
Notice that the condition of the while loop at line~\ref{alg:merge:outer-loop-begin} is 
equivalent to testing whether all $p$-segments contain only one suffix.
Notice that, if $L$ is the length of the longest substring appearing at least twice, then each $(L+1)$-long substring appears once in the input strings $S$, that is
all $(L+1)$-prefixes of some suffixes are unique.    
Consequently, the procedure iterates exactly $L$ times over the loop at line~\ref{alg:merge:outer-loop-begin}.
Therefore, the overall I/O complexity of the algorithm is $O(kmL(\log k +\log \sigma))$

\section{Conclusions}

We have presented a new lightweight algorithm to compute the BWT and the LCP array of a set of strings, whose I/O complexity is competitive with BCRext.
More precisely, our overall I/O complexity is $O(kmL(\log k +\log \sigma))$, while BCRext 
has $O(mk (k\log \sigma + \log(mk)))$.

While our focus has been on the theoretical aspects, it would be interesting to implement the proposed algorithm and perform an experimental analysis to determine the practical behavior.
Since the number of iterations of the Merge-suffixes procedure is not fixed a priori, we
expect a finely tuned implementation to provide great improvements.

\newpage\section*{Time complexity}

{\bf BCR} requires $O(k^{2}\mathtt{sort}(m))$ time, $O(m\log(\sigma) + m\log(mk) +
m\log(m))$ main memory, and $O(mk^2\log(\sigma))$ I/O.
At each iteration $i$, it maintains in main memory 3 lists of length $m$ that contain the
characters that have to be inserted at the current step ($m\log(\sigma)$ bits), the
position where the characters have to be inserted ($m\log(mk)$ bits), and the indexes of
the reads sorted by their $k - i$ suffix  ($m\log(m)$ bits) respectively. Note that the
computation of the position where the current characters have to be inserted (\ie the rank
of such character) is performed by a sequential scan of the partial BWT and that at each
position an occurrence counter
% \notaestesa{MP}{non so bene come definirlo}
is modified. This means that at each iteration {\bf BCR} performs $O(km)$ computations.

{\bf BCRext} requires $O(k^{2}m)$ time, $O(\sigma^2\log(mk))$ main memory, and $O(mk^2\log(\sigma) + mk\log(mk) + mk\log(m))$ I/O. It aims to lower the main memory requirement of {\bf BCR} by storing the 3 lists of the previous approach in external memory and accesses them sequentially. Moreover, this method requires to read and write the input sequences at each iteration and has an additional $O(mk^2\log(\sigma)$ I/O. At each iteration $i$, {\bf BCRext} sequentially reads the external files and implicitly sorts the reads by their $k - i$ suffix, thus obtaining the correct sorting of the elements to be added at iteration $i+1$.

\begin{table}
\begin{tabular}{p{3em}p{11em}p{11em}p{11em}}

~                & \texttt{BCR} (with LCP)              & \texttt{BCRext} (BWT only)           & This paper (with LCP)\\ \hline
CPU Time         & $O(k(m + \mathtt{sort}(m)))$         & $O(km)$                              & $O(kml)$ \\ 
~                & ~                                    & ~                                    & ~ \\
RAM usage (bits) & $O((m + \sigma^2)\log(mk))$ 	        & $O(\sigma^2\log(mk))$                & $O(m\log(\sigma))$ \\
~                & ~                                    & ~                                    & ~ \\
I/O (bits)       & $O(mk^2\log(\sigma))$ (partial BWT)  & $O(mk^2\log(\sigma))$ (partial BWT)  & $O(mk\log\sigma))$ (vectors $B_*$) \\
~                & $O(mk^2\log(k))$ (partial LCP)       & $O(mk^2\log(\sigma))$ (sequences)    & $O(mk\log(m))$ (vectors $N_*$) \\
~                & $O(mk\log(\sigma))$ (current symbols)& $O(mk\log(mk))$ (P array)            & $O(mkL\log(k))$ (interleave)\\
~                & ~                                    & $O(mk\log(m))$ (N array)             & $O(mkL\log(\sigma))$ (vectors $Q_*$) \\
~                & ~                                    & ~                                    & $O(mkL)$ (vectors $End_p$) \\

\end{tabular}
\caption{Time and space complexity comparison with \texttt{BCR} and \texttt{BCRext}.}
\end{table}

\bibliographystyle{plain}
\bibliography{bwt}

\end{document}